\documentclass[conference]{IEEEtran}
\IEEEoverridecommandlockouts
% The preceding line is only needed to identify funding in the first footnote. If that is unneeded, please comment it out.
\usepackage{cite}
\usepackage{amsmath,amssymb,amsfonts}
\usepackage{algorithmic}
\usepackage{float}
\usepackage{graphicx}
\usepackage{textcomp}
\usepackage{xcolor}
\usepackage{subfigure}
\usepackage{array}
\usepackage{multicol}
\usepackage{multirow}
\allowdisplaybreaks

% \usepackage{biblatex}
% \addbibresource{ref.bib}

\newtheorem{corollary}{Corollary}

\newtheorem{lemma}{Lemma}
\newtheorem{theorem}{Theorem}
\newenvironment{proof}{{\ \ \noindent\it Proof:\ }}

\def\BibTeX{{\rm B\kern-.05em{\sc i\kern-.025em b}\kern-.08em
    T\kern-.1667em\lower.7ex\hbox{E}\kern-.125emX}}
\begin{document}

\title{Learning Based Joint Coding-Modulation for Digital Semantic Communication Systems
% {\footnotesize \textsuperscript{*}Note: Sub-titles are not captured in Xplore and
% should not be used}
\thanks{This work is supported by the NSF of China under grant 62125108, 61901261, 12031011.}
}

\makeatletter
\newcommand{\linebreakand}{%
  \end{@IEEEauthorhalign}
  \hfill\mbox{}\par
  \mbox{}\hfill\begin{@IEEEauthorhalign}
}
\makeatother

\author{\IEEEauthorblockN{Yufei Bo}
\IEEEauthorblockA{\textit{Department of Electronic Engineering} \\
\textit{Shanghai Jiao Tong University}\\
Shanghai, China \\
boyufei01@sjtu.edu.cn}
\and
\IEEEauthorblockN{Yiheng Duan}
\IEEEauthorblockA{\textit{Department of Electronic Engineering} \\
\textit{Shanghai Jiao Tong University}\\
Shanghai, China \\
duanyiheng@sjtu.edu.cn}
\linebreakand % <------------- \and with a line-break
\IEEEauthorblockN{Shuo Shao}
\IEEEauthorblockA{\textit{School of Cyber Science and Engineering} \\
\textit{Shanghai Jiao Tong University}\\
Shanghai, China \\
shuoshao@sjtu.edu.cn}
\and
\IEEEauthorblockN{Meixia Tao}
\IEEEauthorblockA{\textit{Department of Electronic Engineering} \\
\textit{Shanghai Jiao Tong University}\\
Shanghai, China \\
mxtao@sjtu.edu.cn}
% \and
% \IEEEauthorblockN{5\textsuperscript{th} Given Name Surname}
% \IEEEauthorblockA{\textit{dept. name of organization (of Aff.)} \\
% \textit{name of organization (of Aff.)}\\
% City, Country \\
% email address or ORCID}
% \and
% \IEEEauthorblockN{6\textsuperscript{th} Given Name Surname}
% \IEEEauthorblockA{\textit{dept. name of organization (of Aff.)} \\
% \textit{name of organization (of Aff.)}\\
% City, Country \\
% email address or ORCID}
}

\maketitle

\begin{abstract}
In learning-based semantic communications, neural networks have replaced different building blocks in traditional communication systems. However, the digital modulation still remains a challenge for neural networks. The intrinsic mechanism of neural network based digital modulation is mapping continuous output of the neural network encoder into discrete constellation symbols, which is a non-differentiable function that cannot be trained with existing gradient descend algorithms. To overcome this challenge, in this paper we develop a joint coding-modulation scheme for digital semantic communications with BPSK modulation. In our method, the neural network outputs the likelihood of each constellation point, instead of having a concrete mapping. A random code rather than a deterministic code is hence used, which preserves more information for the symbols with a close likelihood on each constellation point. The joint coding-modulation design can match the modulation process with channel states, and hence improve the performance of digital semantic communications. Experiment results show that our method outperforms existing digital modulation methods in semantic communications over a wide range of SNR, and outperforms neural network based analog modulation method in low SNR regime.   
\end{abstract}

\begin{IEEEkeywords}
Semantic communications, variational autoencoder, digital modulation.
\end{IEEEkeywords}

\section{Introduction}
Recently, semantic communications are envisioned to be a key enabling technique for future 6G communication networks\cite{ping}. As a paradigm shift beyond Shannon, semantic communications only transmit necessary information relevant to the  specific task at the receiver, known as ``semantic information'', and therefore can significantly improve transmission efficiency. Semantic communications are regarded as a truly intelligent communication framework. They can find a wide range of applications such as smart transportation, video conference, augmented reality (AR), interactive hologram, and intelligent humanoid robots.

Thanks to the rapid development of machine learning, neural networks (NNs) are powerful tools to learn and extract the hidden semantic information and have been successfully applied in the design of various semantic communication systems. 
Specifically, neural networks are most widely used to replace the classic coding and decoding blocks.
For example, the neural network is used as source coding in the semantic communication system developed by Dai \emph{et al}. in \cite{coded_trans}. Neural networks are also used as joint source channel coding (JSCC) in \cite{8bit, gunduz1}. Meanwhile, neural networks can also achieve other functions such as repeated requesting. For instance, Jiang \emph{et al}.\cite{harq} integrate neural network based Hybrid Automatic Repeat Request (HARQ) mechanism into semantic communication systems.

\begin{figure*}[htb]
\centering
\setlength{\abovecaptionskip}{-0.2cm}
	\includegraphics[scale=0.14]{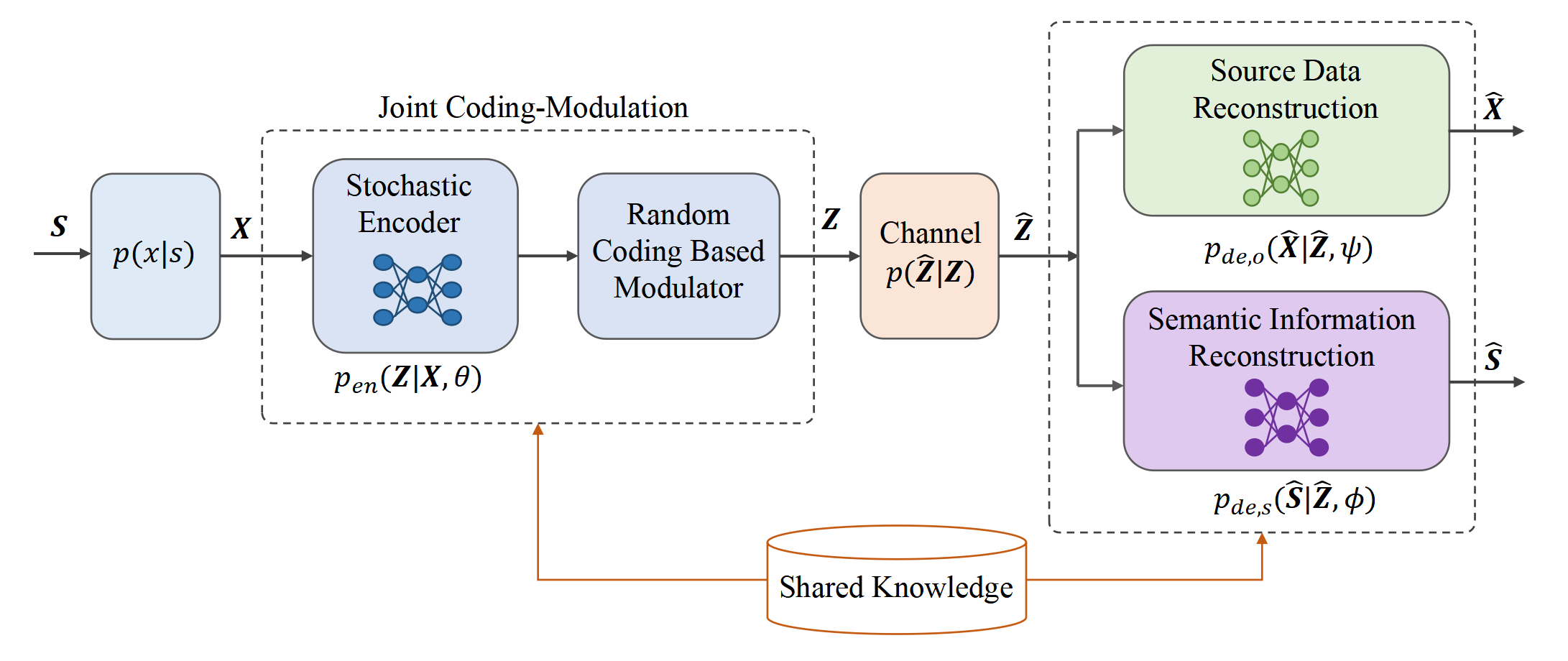}
\caption{The semantic communication system model.}
\vspace{-0.6cm}
\label{semantic}
\end{figure*}

Among different building blocks in modern wireless communications, digital modulation remains a difficult component to be modeled by neural networks. Analog modulation is used in most previous works on NN-based semantic communication systems\cite{gunduz1, speech1, zhang, vqa}, which does not match up with the real-world digital systems. The difficulty of realizing NN-based digital modulation comes from its intrinsic mechanism. The output of NN-based encoders is often real numbers, and the modulator needs to establish a mapping from these real numbers to discrete constellation symbols, which is equivalent to a non-differentiable function. Hence this modulator cannot be achieved by neural networks directly, since all the existing gradient descend algorithms will fail with the non-differentiable mapping function.

To the best of our knowledge, there are two different methods to achieve digital modulation in semantic communications. One straightforward idea is to use a non-neural-network-based method for digital modulation. For example, an 8-bit quantizer is used in \cite{8bit} to convert the continuous output of the JSCC encoder to discrete symbol sequences, which can be transmitted through digital communication systems. Another approach is to use an extra neural network to generate the likelihood of constellations instead of a concrete one. For example, Jiang \emph{et al}. \cite{harq} use an MLP network with Sigmoid function to train a non-uniform quantizer that has the minimum quantization error, and further associate the quantized bits sequence with the constellation $\left \lbrace+1, -1\right \rbrace$ for BPSK modulation.
However, in both quantization-based modulation methods, the modulator block works separately with the coding block, which will decrease the transmission performance especially when the channel noise situation is poor. 

In this paper, we propose an NN-based joint coding-modulation scheme for semantic communication systems with BPSK modulation. Specifically, the joint coding-modulation (JCM) is modeled by an NN-based stochastic encoder in conjunction with a random coding-based modulator. The NN learns the transfer probability from source data, while the random coding generates the actual modulated symbols for channel transmission based on the learned transfer probability. Without intermediate steps and jointly trained in one unique neural network, our method can better match up the coding and modulation process with channel states.

In addition, the capacity-approaching property of random coding can preserve more information for those symbols who has a marginal probability around $0.5$. Moreover, 
the Gumbel-max reparameterization method\cite{gumbel} is used in random code generation to alleviate the computation complexity of training the JCM network. Accordingly, at the receiver side, two separate NNs are used to reconstruct the source data and the semantic information, respectively, and they are trained jointly with the neural network at the transmitter end.

Based on the proposed JCM scheme, we develop an end-to-end semantic communication system using the Variational Autoencoder (VAE) architecture\cite{vae}. To this end, two theoretic contributions have been made on the loss function design for the VAE network.
First, we derive a variational inference lower bound for the mutual information based neural network optimization objective function. This lower bound is much easier to estimate and optimize than the mutual information function when data are high dimensional variables, and can be used in VAE networks. Second, we further extend the established lower bound to the scenario of image semantic communications, and derive a loss function that consists of cross entropy and mean square error distortion function.
We would like to point out that the meaning of theses two theoretic results are beyond the engineering target in this paper. Both the mutual information based\cite{nesct} and data distortion based\cite{gunduz1} loss functions are widely used in semantic communications, but existing semantic rate distortion model \cite{rate} only tells us that they are strongly related without telling us how exactly they are related. Our results reveal a more concrete insight that this relationship should be that one can lower bound the other in image semantic communications.   

Experiment results on real-world datasets show that our method can inherit the performance advantage of VAE in neural network based JSCC problems\cite{zhang}. Specifically, our method outperforms existing methods for digital semantic communications, namely the uniform quantizer \cite{8bit} and the neural network based quantizer \cite{harq}. Moreover, experiments also show that our method can have a close performance with neural network based semantic communication systems with analog modulation when the Signal-to-Noise-Ratio (SNR) is high, and a better performance when SNR is low.

\begin{figure*}[!t]
\normalsize
\setcounter{equation}{1}
\begin{align}
&I(\mathbf{S};\mathbf{\hat Z}) + \lambda\cdot I(\mathbf{\mathbf{X};\hat Z}) \ge
-\mathbb{E}_{p(\mathbf{\hat z}|\mathbf{z})}\mathbb{E}_{p_{en}(\mathbf{z}|\mathbf{x}, \theta)}\mathbb{E}_{p(\mathbf{s},\mathbf{x})} \lbrace \mathrm{CE}\left \lbrack p(\mathbf{s}|\mathbf{\hat{z}})||p_{de,s}(\mathbf{\hat s}|\mathbf{\hat z}, \phi)\right \rbrack  +\lambda\cdot \mathrm{CE}\left \lbrack p(\mathbf{x}|\mathbf{\hat{z}})||p_{de,o}(\mathbf{\hat x}|\mathbf{\hat z}, \psi)\right \rbrack\rbrace + K
    \label{elbo}\\
&I(\mathbf{S};\mathbf{\hat Z}) + \lambda\cdot I(\mathbf{\mathbf{X};\hat Z}) \ge
-\mathbb{E}_{p(\mathbf{\hat z})}
    \lbrace \mathrm{CE}\left \lbrack p(\mathbf{s}|\mathbf{\hat{z}})||p_{de,s}(\mathbf{\hat s}|\mathbf{\hat z}, \phi)\right \rbrack 
+\lambda\cdot \mathrm{CE}\left \lbrack p(\mathbf{x}|\mathbf{\hat{z}})||p_{de,o}(\mathbf{\hat x}|\mathbf{\hat z}, \psi)\right \rbrack\rbrace +K 
\label{lemma}\\
%I(\mathbf{S};\mathbf{\hat Z}) + \lambda I(\mathbf{\mathbf{X};\hat Z})\\
&\mathcal{L}(\theta, \phi, \psi;\mathbf{X}^N,\mathbf{S}^N)=
    \frac{1}{N}\sum_{j=1}^{N} \left\lbrack \sum_{m=1}^{M} -p(\mathbf{s}_j=m|\mathbf{\hat{z}}_j)\log p_{de,s}(\mathbf{\hat s}_j=m|\mathbf{\hat z}_j, \phi)
    + \lambda \cdot  ||\mathbf{x}_j- f_\psi(\mathbf{\hat z}_j)||_2^2 \right\rbrack 
    \label{empirical obj}
\end{align}
\hrulefill
\vspace*{-0.6cm}
\end{figure*}

\section{Problem Formulation}

Fig. \ref{semantic} shows the semantic communication system we focus on in this paper.
There is a source data $\mathbf{X}$, associated with its unknown semantic information $\mathbf{S}$, to be communicated through a noisy channel for both data recovery and semantic processing. There is also a knowledge base which stores a sufficient number of empirical data tuples of $(\mathbf{x},\mathbf{s})$, and is available to both the transmitter and the receiver. 
Unlike the classic approach of separately designing coding and modulation, the proposed joint coding-modulation is designed as a neural network followed by a random coding. The NN is to map source data $\mathbf{X}$ to the transfer probability while the random coding generates discretely modulated symbol $\mathbf{Z}$ based on this transfer probability. Accordingly, the receiver consists of two neural networks to map from channel output $\hat{\mathbf{Z}}$ to the reconstructed source data $\hat{\mathbf{X}}$ and reconstructed semantic information $\hat{\mathbf{S}}$, respectively. 

We consider a mandatory requirement for digital communication in our system, where BPSK modulation is used for simplicity. High-order modulation shall be considered in the future work. That is to say, the channel input, denoted as $\mathbf{Z}=(Z_1, Z_2, ..., Z_n)$, is an $n$-length sequence taking values from $\left\lbrace-1, +1 \right\rbrace ^{n}$. Hence the channel output can be written as $\mathbf{\hat Z}=\mathbf{Z}+\boldsymbol{\varepsilon}$, where $\boldsymbol{\varepsilon}\sim \mathcal{N}(\mathbf{0}, \sigma ^{2}\mathbf{I}_n)$ is the channel noise. 

Therefore, a Markov chain can be established as $\mathbf{S}\to \mathbf{X}\to \mathbf{Z}\to \mathbf{\hat Z}\to (\mathbf{\hat S},\mathbf{\hat X})$. As shown in Fig. \ref{semantic}, the transition between the observable source data $\mathbf{X}$ and the channel input $\mathbf{Z}$ includes coding and modulation, which are jointly realized by a neural network. This transition is modeled as a stochastic joint encoder-modulator $p_{en}(\mathbf{z}|\mathbf{x}, \theta)$ in our paper, where $\theta$ is the parameter of the neural network. As a remark, a functional encoder or modulator can be regarded as a special case of the stochastic ones. At the receiver end, we use two stochastic demodulator-decoders, namely $p_{de,o}(\mathbf{\hat x}|\mathbf{\hat z}, \psi)$ and $p_{de,s}(\mathbf{\hat s}|\mathbf{\hat z}, \phi)$ for the reconstruction of the observable source data and the semantic information respectively, where the parameters of these two networks are denoted as $\psi$ and $\phi$.

Our goal is to find a JCM network which can preserve as much information about the semantic information and the source data as possible in the channel output $\mathbf{\hat Z}$, and the corresponding receiver networks which can best recover the semantic information and source data. Therefore, a mutual information based theoretic objective function (shorted as ``MI-OBJ'') can be established as follows:
\begin{equation}
\setcounter{equation}{1}
    I(\mathbf{S};\mathbf{\hat Z}) + \lambda I(\mathbf{\mathbf{X};\hat Z}),
    \label{obj}
\end{equation}
where $\lambda$ is a trade-off hyperparameter. The MI-OBJ function is also used in many previous works of semantic communications \cite{zhang,nesct}. By taking the channel output $\mathbf{\hat{Z}}$ into consideration, the joint encoder-modulator can have a better fitting with the channel states. However, this function is sometimes not easy to directly optimize in practical scenarios, and does not take the receiver into consideration. Hence a solution is given in our Section III, which gives an alternative approximation of MI-OBJ with the receiver design also merged in.

\section{Variational Learning for Joint Coding-Modulation}
As mentioned, in some circumstances optimizing (1) is not ``operational''. As pointed out by Alemi \emph{et al}. in \cite{intractable}, it is difficult to estimate the joint distribution $p(\mathbf{x}, \mathbf{\hat z})$ and $p(\mathbf{s}, \mathbf{\hat z})$ when $\mathbf{X}$, $\mathbf{Z}$ and $\mathbf{\hat Z}$ are high dimensional variables, making MI-OBJ in (1) intractable. Therefore, we hereby use the technique of variational inference\cite{variational} to establish its lower bound and maximize this lower bound instead, which is stated as our main theorem in Section III-A. Based on this lower bound, we further derive a loss function for image transmission, as a corollary in Section III-B. 
Moreover, we apply Gumble-max method for generating random coding based modulation to fit with neural network training, which is discussed in Section III-C. 

\subsection{Variational Inference Lower Bound}
\begin{theorem}[VILB] \label{th1}
A variational inference lower bound of MI-OBJ in (1) is given by \eqref{elbo}
, where $K=H(\mathbf{S})+\lambda\cdot H(\mathbf{X})$ is a constant, $\lambda$ is a trade-off hyperparameter and $\mathrm{CE}\left \lbrack \cdot \right \rbrack$ is the Cross Entropy between two probability distributions.
\end{theorem}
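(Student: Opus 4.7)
The plan is to apply the standard variational trick to each mutual-information term in \eqref{obj} separately and then combine them with the trade-off $\lambda$. Throughout, I would use the Markov chain $\mathbf{S}\to\mathbf{X}\to\mathbf{Z}\to\mathbf{\hat Z}$ to factorize $p(\mathbf{s},\mathbf{x},\mathbf{z},\mathbf{\hat z}) = p(\mathbf{s},\mathbf{x})\,p_{en}(\mathbf{z}|\mathbf{x},\theta)\,p(\mathbf{\hat z}|\mathbf{z})$; this factorization is what eventually produces the nested-expectation form in \eqref{elbo}.

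First, I would decompose $I(\mathbf{S};\mathbf{\hat Z}) = H(\mathbf{S}) - H(\mathbf{S}|\mathbf{\hat Z})$, so that $H(\mathbf{S})$ becomes part of the constant $K$ and what remains to bound is $-H(\mathbf{S}|\mathbf{\hat Z}) = \mathbb{E}_{p(\mathbf{s},\mathbf{\hat z})}[\log p(\mathbf{s}|\mathbf{\hat z})]$. The true posterior $p(\mathbf{s}|\mathbf{\hat z})$ is intractable, so I introduce the variational decoder $p_{de,s}(\mathbf{\hat s}|\mathbf{\hat z},\phi)$ and invoke the non-negativity of the KL divergence (Gibbs' inequality): for every $\mathbf{\hat z}$, $\mathbb{E}_{p(\mathbf{s}|\mathbf{\hat z})}[\log p(\mathbf{s}|\mathbf{\hat z})] \ge \mathbb{E}_{p(\mathbf{s}|\mathbf{\hat z})}[\log p_{de,s}(\mathbf{s}|\mathbf{\hat z},\phi)]$. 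Taking the outer expectation over $p(\mathbf{\hat z})$ and recognizing the right-hand side as $-\mathbb{E}_{p(\mathbf{\hat z})}\mathrm{CE}[p(\mathbf{s}|\mathbf{\hat z})\,||\,p_{de,s}(\mathbf{\hat s}|\mathbf{\hat z},\phi)]$ yields the first half of \eqref{lemma}. The identical argument applied to $\lambda I(\mathbf{X};\mathbf{\hat Z})$ with the decoder $p_{de,o}(\mathbf{\hat x}|\mathbf{\hat z},\psi)$ supplies the other half.

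Summing the two bounds gives the intermediate inequality \eqref{lemma}, with the combined constant $K = H(\mathbf{S}) + \lambda H(\mathbf{X})$. To reach the displayed form \eqref{elbo}, I would expand the outer $\mathbb{E}_{p(\mathbf{\hat z})}$ by marginalizing $(\mathbf{s},\mathbf{x},\mathbf{z})$ along the Markov chain, which rewrites it as $\mathbb{E}_{p(\mathbf{\hat z}|\mathbf{z})}\mathbb{E}_{p_{en}(\mathbf{z}|\mathbf{x},\theta)}\mathbb{E}_{p(\mathbf{s},\mathbf{x})}[\cdot]$; this step is pure bookkeeping. The only care required in the whole argument is to choose variational decoders conditioned on the channel output $\mathbf{\hat z}$ (rather than, say, on $\mathbf{z}$), so that the resulting bound is evaluable from quantities produced by the receiver architecture of Section~II, and to correctly track which expectation runs over which random variable during the Markov-chain expansion. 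There is no substantive obstacle: the result is just a two-term instance of the standard ELBO/variational-information template, specialized to the objective \eqref{obj}.
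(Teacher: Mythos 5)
Your proposal is correct and follows essentially the same route as the paper: the paper likewise first establishes the intermediate bound \eqref{lemma} via the KL-non-negativity (Gibbs) argument on each mutual-information term (its Lemma \ref{le1}), and then obtains \eqref{elbo} by factorizing $p(\mathbf{\hat z})$ along the Markov chain $p(\mathbf{s},\mathbf{x})\,p_{en}(\mathbf{z}|\mathbf{x},\theta)\,p(\mathbf{\hat z}|\mathbf{z})$, exactly the bookkeeping step you describe. The only detail the paper adds that you omit is the explicit factorized Bernoulli form of $p_{en}(\mathbf{z}|\mathbf{x},\theta)$ in \eqref{en}, which is a modeling choice for the encoder rather than a logical ingredient of the inequality.
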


Theorem \ref{th1} gives an operational lower bound of (1), where the neural networks, namely the parameters of $\theta$, $\phi$ and $\psi$, can be trained over the optimization process of this lower bound. According to \cite{intractable}, the cross entropy on conditional distributions is much easier to be sampled and estimated than the joint distribution and the mutual information function. 

To prove Theorem \ref{th1}, first 
we need to prove the following lemma, which is an intermediate result that merges the receiver design into MI-OBJ. 

\begin{lemma} A simpler lower bound of MI-OBJ is given by \eqref{lemma},
where $K=H(\mathbf{S})+\lambda\cdot H(\mathbf{X})$.
\label{le1}
\end{lemma}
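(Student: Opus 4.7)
The plan is to apply the standard variational lower bound on mutual information to each term of the MI-OBJ in (1) separately, then combine. First, I would decompose $I(\mathbf{S};\mathbf{\hat Z}) = H(\mathbf{S}) - H(\mathbf{S}|\mathbf{\hat Z})$ and $I(\mathbf{X};\mathbf{\hat Z}) = H(\mathbf{X}) - H(\mathbf{X}|\mathbf{\hat Z})$. The marginal entropies $H(\mathbf{S})$ and $H(\mathbf{X})$ depend only on the fixed source/semantic distribution and carry no network parameters, so together they will supply the constant $K = H(\mathbf{S}) + \lambda H(\mathbf{X})$ in \eqref{lemma}; the only remaining work is therefore to upper bound the two conditional entropies.

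Second, I would upper bound each conditional entropy using the receiver networks as variational approximations to the true posteriors. For every fixed $\mathbf{\hat z}$, Gibbs' inequality (equivalently, non-negativity of the KL divergence between $p(\mathbf{s}|\mathbf{\hat z})$ and $p_{de,s}(\mathbf{\hat s}|\mathbf{\hat z},\phi)$) gives $H(p(\mathbf{s}|\mathbf{\hat z})) \leq \mathrm{CE}[p(\mathbf{s}|\mathbf{\hat z}) \,\|\, p_{de,s}(\mathbf{\hat s}|\mathbf{\hat z},\phi)]$, and the analogous statement holds with $p_{de,o}$ in place of $p_{de,s}$. Taking expectations with respect to the channel-output marginal $p(\mathbf{\hat z})$ produces upper bounds on $H(\mathbf{S}|\mathbf{\hat Z})$ and $H(\mathbf{X}|\mathbf{\hat Z})$ of exactly the form appearing in \eqref{lemma}. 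Multiplying the $\mathbf{X}$ bound by $\lambda$, summing the two, and plugging back into the decomposition of the mutual informations flips the sign of the cross-entropy terms and yields \eqref{lemma}.

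The main ``obstacle'' here is bookkeeping rather than anything deep: one must keep clear that the expectation is taken under the channel-output marginal $p(\mathbf{\hat z})$ induced by the joint of the data distribution $p(\mathbf{s},\mathbf{x})$, the stochastic encoder $p_{en}(\mathbf{z}|\mathbf{x},\theta)$, and the channel transition $p(\mathbf{\hat z}|\mathbf{z})$, and that one applies Gibbs' inequality pointwise in $\mathbf{\hat z}$ before taking this expectation. No Markov-chain manipulation or data-processing step is needed, because both MI terms in (1) already involve $\mathbf{\hat Z}$ directly. As a sanity check, equality in \eqref{lemma} is attained precisely when the two decoder networks coincide with the true posteriors $p(\mathbf{s}|\mathbf{\hat z})$ and $p(\mathbf{x}|\mathbf{\hat z})$, so the tightness of the bound is controlled exactly by how well the receivers approximate these posteriors, which is also the intuition that later motivates Theorem \ref{th1}.
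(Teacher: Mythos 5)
Your proposal is correct and follows essentially the same route as the paper: both decompose each mutual information as marginal entropy minus conditional entropy, then bound the conditional entropy by the cross entropy against the decoder networks via non-negativity of the KL divergence (your ``Gibbs' inequality pointwise in $\mathbf{\hat z}$, then average'' is exactly the paper's step $(b)$ applied under the expectation over $p(\mathbf{s},\mathbf{\hat z})$). The equality condition you note --- decoders matching the true posteriors --- is consistent with the paper's framing of $p_{de,s}$ and $p_{de,o}$ as variational approximations to those posteriors.
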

\begin{proof}
First, considering the term $I(\mathbf{S};\mathbf{\hat Z})$, we have:
\begin{align}
\setcounter{equation}{4}
    I(\mathbf{S};\mathbf{\hat Z})
    &=-H(\mathbf{S}|\mathbf{\hat Z}) + H(\mathbf{S}) \nonumber\\
    &=\mathbb{E}_{p(\mathbf{s},\mathbf{\hat z})} \log \left \lbrack p(\mathbf{s}|\mathbf{\hat z}) \frac{p_{de,s}(\mathbf{\hat s}|\mathbf{\hat z}, \phi)}{p_{de,s}(\mathbf{\hat s}|\mathbf{\hat z}, \phi)}\right \rbrack + H(\mathbf{S}) \nonumber\\
    &\overset{(a)}{=}\mathrm{KL}\left \lbrack p(\mathbf{s}|\mathbf{\hat z})||p_{de,s}(\mathbf{\hat s}|\mathbf{\hat z}, \phi)\right \rbrack \nonumber \\
    &\ \ \ \ +\mathbb{E}_{p(\mathbf{s},\mathbf{\hat z})} \log p_{de,s}(\mathbf{\hat s}|\mathbf{\hat z}, \phi) + H(\mathbf{S}) \nonumber \\
    &\overset{(b)}{\ge} \mathbb{E}_{p(\mathbf{s},\mathbf{\hat z})} \log p_{de,s}(\mathbf{\hat s}|\mathbf{\hat z}, \phi) + H(\mathbf{S}) \label{same} \\
    % &=\mathbb{E}_{p(\mathbf{\hat z})}\mathbb{E}_{p(\mathbf{s}|\mathbf{\hat{z}})}\log p_{de,s}(\mathbf{\hat s}|\mathbf{\hat z}, \phi)+ H(\mathbf{S}) \nonumber\\
    &\overset{(c)}{=} -\mathbb{E}_{p(\mathbf{\hat z})} \mathrm{CE}\left \lbrack p(\mathbf{s}|\mathbf{\hat{z}})|| p_{de,s}(\mathbf{\hat s}|\mathbf{\hat z}, \phi)\right \rbrack + H(\mathbf{S}), \label{sz}
    % \label{same}
\end{align}
where $(a)$ follows the definition of KL divergence, and the semantic decoder $p_{de,s}(\mathbf{\hat s}|\mathbf{\hat z}, \phi)$ is expanded as a variational approximation to the true posterior $p(\mathbf{s}|\mathbf{\hat z})$; $(b)$ follows the non-negative property of KL divergence; $(c)$ follows the definition of Cross Entropy.

Similarly, we have:
\begin{equation}
    I(\mathbf{X};\mathbf{\hat Z}) \ge -\mathbb{E}_{p(\mathbf{\hat z})} \mathrm{CE}\left \lbrack p(\mathbf{x}|\mathbf{\hat{z}})|| p_{de,o}(\mathbf{\hat x}|\mathbf{\hat z}, \psi)\right \rbrack + H(\mathbf{X}).
    \label{xz}
\end{equation}

Putting \eqref{sz} and \eqref{xz} together finishes the proof of Lemma \ref{le1}.
\end{proof}

To further prove Theorem 1, we consider the probability of $\mathbf{\hat z}$, which is determined by the stochastic encoder:
\begin{align}
    p(\mathbf{\hat z})=&
    \int _{\mathbf{s},\mathbf{x},\mathbf{z}} p(\mathbf{s},\mathbf{x},\mathbf{z}, \mathbf{\hat z}) d\mathbf{s}d\mathbf{x}d\mathbf{z} \nonumber \\
    =&\int _{\mathbf{s},\mathbf{x},\mathbf{z}} p(\mathbf{s},\mathbf{x})p_{en}(\mathbf{z}|\mathbf{x}, \theta)p(\mathbf{\hat z|\mathbf{z}})d\mathbf{s}d\mathbf{x}d\mathbf{z}. \label{pz}
\end{align}

For channel input symbols $\mathbf{z}=(z_1, z_2, ..., z_n)$, we model them as mutually conditionally independent Bernoulli variables like that in \cite{nesct} with $z_i \sim B(p^{i}(\mathbf{x}, \theta))$, where $p^{i}(\mathbf{x}, \theta)$ is the probability of $z_i=1$. Due to the assumed independency among the components in $\mathbf{z}$, their joint probability mass function (PMF) can be express as:
\begin{equation}
    p_{en}(\mathbf{z}|\mathbf{x}, \theta)=\prod_{i=1}^{n} \lbrack p^{i}(\mathbf{x}, \theta)\rbrack ^{\frac{1}{2}(1+z_i)}\lbrack (1-p^{i}(\mathbf{x}, \theta))\rbrack ^{\frac{1}{2}(1-z_i)}. \label{en}
\end{equation}

By introducing \eqref{pz} and the transition probability \eqref{en} into Lemma \ref{le1}, we can finish the proof of Theorem \ref{th1}.

\subsection{Loss Function Design for Image Semantic Communications}

Based on Theorem \ref{th1}, a corresponding loss function for image semantic communication can be further derived.  

\begin{corollary}[Loss function for image source data]
\label{co1}
Consider image semantic communication scenarios with image classification as semantics, the loss function can be written as \eqref{empirical obj}
, where $(\mathbf{X}^N,\mathbf{S}^N)=\left \lbrace(\mathbf{x}_j, \mathbf{s}_j)\right\rbrace_{j=1}^{N}$ denotes the batch of training data; $M$ is the total number of classes; $f_\psi(\mathbf{\hat z}_j)$ is the mean of the Gaussian-parameterized stochastic decoder for the observable source data reconstruction, which comes from a special assumption for image source data that we will later elaborate; $||\cdot||$ denotes the $l_2$ norm;
$\mathbf{z}_j=y_\theta(\mathbf{x}_j,\mathbf{g}_j)$
is the constellation symbol randomly generated by source data $\mathbf{x}_j$ and auxiliary random variable $\mathbf{g}_j$, which will be introduced later in Subsection III-C. 
\end{corollary}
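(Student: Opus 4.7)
The plan is to specialize the variational lower bound from Theorem~\ref{th1} to the image-classification setting and then turn the resulting population objective into an empirical loss evaluable on a minibatch. Since maximizing a lower bound on the MI-OBJ is the same as minimizing the negative of the right-hand side of~\eqref{elbo}, I would first flip the sign, discard the constant $K$ (which depends only on the data distribution and not on $\theta,\phi,\psi$), and work with the nested expectation $\mathbb{E}_{p(\mathbf{s},\mathbf{x})}\mathbb{E}_{p_{en}(\mathbf{z}|\mathbf{x},\theta)}\mathbb{E}_{p(\mathbf{\hat z}|\mathbf{z})}\{\mathrm{CE}[p(\mathbf{s}|\mathbf{\hat z})||p_{de,s}]+\lambda\,\mathrm{CE}[p(\mathbf{x}|\mathbf{\hat z})||p_{de,o}]\}$ as the quantity to be minimized.

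Next I would reduce each cross entropy to a closed form tailored to the data modality. For the semantic term, since the class label $\mathbf{s}$ ranges over the finite set $\{1,\dots,M\}$, the definition of cross entropy expands directly into $\sum_{m=1}^{M}-p(\mathbf{s}=m|\mathbf{\hat z})\log p_{de,s}(\mathbf{\hat s}=m|\mathbf{\hat z},\phi)$, which is exactly the classification contribution appearing in~\eqref{empirical obj}. For the image-reconstruction term, I would introduce the standard modeling assumption that $p_{de,o}(\mathbf{\hat x}|\mathbf{\hat z},\psi)$ is an isotropic Gaussian with mean $f_\psi(\mathbf{\hat z})$ and fixed covariance $\sigma_o^{2}\mathbf{I}$. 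Under this assumption $-\log p_{de,o}(\mathbf{x}|\mathbf{\hat z},\psi)=\frac{1}{2\sigma_o^{2}}||\mathbf{x}-f_\psi(\mathbf{\hat z})||_2^{2}+C$, where $C$ is independent of the trainable parameters, so inserting this into the cross entropy and absorbing the factor $1/(2\sigma_o^{2})$ into $\lambda$ yields a term proportional to $\mathbb{E}_{p(\mathbf{x}|\mathbf{\hat z})}||\mathbf{x}-f_\psi(\mathbf{\hat z})||_2^{2}$, which combines with the outer $\mathbb{E}_{p(\mathbf{\hat z})}$ to give $\lambda\,\mathbb{E}_{p(\mathbf{x},\mathbf{\hat z})}||\mathbf{x}-f_\psi(\mathbf{\hat z})||_2^{2}$.

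Finally I would pass from the population objective to the empirical one by Monte-Carlo estimation on the minibatch $(\mathbf{X}^{N},\mathbf{S}^{N})$. Sampling $(\mathbf{x}_j,\mathbf{s}_j)$ from the training set handles the $p(\mathbf{s},\mathbf{x})$ expectation; drawing $\mathbf{z}_j$ from $p_{en}(\cdot|\mathbf{x}_j,\theta)$ through the Gumbel-max reparameterization $\mathbf{z}_j=y_\theta(\mathbf{x}_j,\mathbf{g}_j)$ handles the encoder expectation while keeping the sampler differentiable in $\theta$; and the channel step $\mathbf{\hat z}_j=\mathbf{z}_j+\boldsymbol{\varepsilon}_j$ handles the inner channel expectation. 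Averaging the two per-sample contributions over $j=1,\dots,N$ then reproduces exactly the expression in~\eqref{empirical obj}.

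The step I expect to be the most delicate is the squared-error reduction in the reconstruction term, because it is the only place where a concrete parametric family for the decoder has to be posited and where additive constants tied to the decoder variance have to be cleanly absorbed or dropped without perturbing the optimum in $(\theta,\phi,\psi)$. Once that modeling choice is made, the remainder is essentially bookkeeping that links variational inference to minibatch sampling.
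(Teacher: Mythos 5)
Your proof is correct and follows the same overall route as the paper: specialize Theorem~\ref{th1} to a categorical semantic decoder (giving the $\sum_{m=1}^{M}$ cross-entropy term) and a Gaussian-parameterized image decoder (giving the squared error), then pass to the minibatch average with the Gumbel-max sample $\mathbf{z}_j=y_\theta(\mathbf{x}_j,\mathbf{g}_j)$ and the channel noise supplying the Monte-Carlo estimates. The one place where you genuinely diverge is the reduction of the reconstruction cross entropy, and your version is arguably the cleaner of the two. The paper assumes \emph{both} that the variational decoder $p_{de,o}(\mathbf{\hat x}|\mathbf{\hat z},\psi)$ is isotropic Gaussian with mean $f_\psi(\mathbf{\hat z})$ \emph{and} that the true posterior $p(\mathbf{x}|\mathbf{\hat z})$ is itself an isotropic Gaussian centered at the true pixel values $\boldsymbol{\mu}$ (equations \eqref{true posterior}--\eqref{variational}); it then evaluates the Gaussian--Gaussian cross entropy in closed form via the trace identity $\mathbf{x}^{\mathrm T}\mathrm{A}\mathbf{x}=\mathrm{tr}(\mathrm{A}\mathbf{x}\mathbf{x}^{\mathrm T})$, arriving at $-a\,\mathbb{E}_{p(\mathbf{\hat z})}||\boldsymbol{\mu}_1-f_\psi(\mathbf{\hat z})||_2^2+b$ before going empirical. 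You instead only posit the decoder family, write $-\log p_{de,o}(\mathbf{x}|\mathbf{\hat z},\psi)=\tfrac{1}{2\sigma_o^2}||\mathbf{x}-f_\psi(\mathbf{\hat z})||_2^2+C$ pointwise, and let the minibatch sample of $\mathbf{x}_j$ handle the $p(\mathbf{x}|\mathbf{\hat z})$ expectation. The two yield the same loss \eqref{empirical obj} because the paper's $\boldsymbol{\mu}$ is precisely the true pixel vector that your Monte-Carlo sample plugs in; what your route buys is one fewer modeling assumption (no Gaussianity of the true posterior is needed), while the paper's route buys an explicit closed-form population expression \eqref{mse} before discretization. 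Your closing caveat about absorbing the $1/(2\sigma_o^2)$ factor and additive constants into $\lambda$ and the discarded constant is exactly the bookkeeping the paper performs with its constants $a$, $b$, and $K$, so there is no gap.
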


\begin{proof}
We first introduce some widely recognized assumptions of image source data, and then we further derive the loss function by applying these assumptions to \eqref{elbo} and replacing the probability function with the empirical distribution.

As pointed out in \cite{nesct}, the source data $\mathbf{X}$ and its reconstruction $\mathbf{\hat{X}}$ can be modeled as multivariate factorized Gaussian distributions with isotropic covariance. Thus, we can assume that 
\begin{align}
  &   p(\mathbf{x|\hat z})= \mathcal{N}(\boldsymbol{\mu},\sigma_1^2\mathbf{I}_{m\times m}) \label{true posterior},\\
&p_{de,o}(\mathbf{\hat x|\hat z},\psi)= \mathcal{N}(f_{\psi}(\mathbf{\hat z}),\sigma_2^2\mathbf{I}_{m\times m}), \label{variational}
\end{align}
where $\boldsymbol{\mu}$ is the true pixel value of the source data (usually normalized between $0$ and $1$), $\sigma_1$ and $\sigma_2$ are precision parameters treated as constants, and $f_{\psi}(\mathbf{\hat z})$ is the output of the stochastic decoder for the reconstruction of the source data. 

By taking \eqref{true posterior} and \eqref{variational} into Theorem \ref{th1}, \eqref{mse} can be further obtained. It shows the variational lower bound of $I(\mathbf{X};\mathbf{\hat Z})$, where $(a)$ follows the same step as in (\ref{same}), and $p_{de,o}(\mathbf{\hat x}|\mathbf{\hat z}, \psi)$ is the neural network parameterized variational approximation to the true posterior $p(\mathbf{x}|\mathbf{\hat z})$; $(b)$ follows the expansion of the expectation; $(c)$ follows the repeated use of the equation $\mathbf{x}^{\mathrm{T}}\mathrm{A}\mathbf{x}=\mathrm{tr}(\mathrm{A}\mathbf{x}\mathbf{x}^{\mathrm{T}})$ with $a$ and $b$ being two constants.

The semantic information is set as the class of the images, and thus is categorically distributed. 
By replacing probability function with empirical distribution, we have
\begin{align}
    \frac{1}{N}\sum_{j=1}^{N}   \sum_{m=1}^{M} -p(\mathbf{s}_j=m|\mathbf{\hat{z}}_j)\log p_{de,s}(\mathbf{\hat s}_j=m|\mathbf{\hat z}_j, \phi)& \\
   \rightarrow \mathbb{E}_{p(\mathbf{\hat z})} \mathrm{CE}\left \lbrack p(\mathbf{s}|\mathbf{\hat{z}})|| p_{de,s}(\mathbf{\hat s}|\mathbf{\hat z}, \phi)\right \rbrack& \nonumber\\
     \frac{1}{N}\sum_{j=1}^{N}  || \mathbf{x}_j- f_\psi(\mathbf{\hat z}_j)||_2^2 \rightarrow \mathbb{E}_{p(\mathbf{\hat z_j})}||\boldsymbol{\mu}_1-f_\psi(\mathbf{\hat{z}})||_2^2 &  ,
\end{align}
where ``$\rightarrow$'' denotes the convergence due to the law of large numbers, and $M$ denotes the total number of classes. Thus, we complete the proof of Corollary \ref{co1}.
\end{proof}

As a remark, in Corollay \ref{co1}, the loss function is a weighted sum of two terms. One term is the cross entropy between semantic information and its reconstruction, and the other is the mean square error of the reconstructed observable data. Thus, it is a loss function based on data distortion function, which has been widely used in previous works such as \cite{gunduz1, zhang}. Essentially, Corollary \ref{co1} shows that the data distortion function based loss function in \eqref{empirical obj} is a lower bound of the 
MI-OBJ in (1), when specified to image semantic communication scenarios. This conclusion gives a more concrete bridge between these two types of loss functions. Meanwhile, since deriving Corollary \ref{co1} uses the property of image data, it also partially explains why in text semantic communications the loss functions in \eqref{empirical obj} is not often used. 
\begin{figure*}[!t]
\normalsize
\setcounter{equation}{14}
\begin{align}
\label{mse}
    I(\mathbf{X};\mathbf{\hat Z})&=-H(\mathbf{X}|\mathbf{\hat Z})+H(\mathbf{X}) \nonumber\\ 
    &\overset{(a)}{\ge} \mathbb{E}_{p(\mathbf{\hat z})}\mathbb{E}_{p(\mathbf{x}|\mathbf{\hat z})} \log p_{de,o}(\mathbf{\hat x}|\mathbf{\hat z}, \psi) + H(\mathbf{X}) \nonumber\\
    &\overset{(b)}{=} \int_{\mathbf{\hat{z}}}p(\mathbf{\hat z}) d \mathbf{\hat z}\int_{\mathbf{x}}\frac{1}{(2\pi )^{\frac{m}{2}} \sigma_{1}}e^{-\frac{1}{2}(\mathbf{x}-\mathbf{\boldsymbol{\mu}})^{T}\frac{1}{\sigma_1^{2}}\mathbf{I}(\mathbf{x}-\mathbf{\boldsymbol{\mu}})}\log \frac{1}{(2\pi )^{\frac{m}{2}} \sigma_{2}}e^{-\frac{1}{2}(\mathbf{\hat x}-f_\psi(\mathbf{\hat{z}}))^{T}\frac{1}{\sigma_2^{2}}\mathbf{I}(\mathbf{\hat x}-f_\psi(\mathbf{\hat{z}}))}d \mathbf{x} + H(\mathbf{X})\nonumber \\
    &\overset{(c)}{=} \int_{\mathbf{\hat{z}}}p(\mathbf{\hat z})  (b-a\left\lbrack\boldsymbol{\mu}_1-f_\psi(\mathbf{\hat{z}}))^T(\boldsymbol{\mu}_1-f_\psi(\mathbf{\hat{z}}))\right\rbrack d \mathbf{\hat z} + H(\mathbf{X}) \nonumber\\
    % &=-a\cdot \mathbb{E}_{p(\mathbf{\hat z})}(\boldsymbol{\mu}_1-f_\psi(\mathbf{\hat{z}}))^T(\boldsymbol{\mu}_1-f_\psi(\mathbf{\hat{z}}))+b+ H(\mathbf{X})\nonumber\\
    &\overset{(d)}{=}-a \cdot \mathbb{E}_{p(\mathbf{\hat z})}%MSE\left\lbrack \boldsymbol{\mu}, f_\psi(\mathbf{\hat{z}})\right\rbrack 
    ||\boldsymbol{\mu}_1-f_\psi(\mathbf{\hat{z}})||_2^2
    +b + H(\mathbf{X})
\end{align}
\hrulefill
\vspace*{-0.6cm}
\end{figure*}
\subsection{Gumble-max Method for Random Coding}

In order to fit with the neural network training process, a Gumble-max method is applied, along with its corresponding gradient estimation method. Hence in this subsection, the random code generation and gradient estimation by using Gumble-max method will be discussed. 

$\bullet$ \emph{Random code generation}

To get a codeword $\mathbf{z}$, we need to sample from the stochastic encoder $p_{en}(\mathbf{z}|\mathbf{x},\theta)$, so that the empirical distribution of the random code matches its probability distribution. It has been proven true for using Gumbel-max method \cite{gumbelproof}. %For our BPSK case, the specific transformation is given by (\ref{mapping}). 
Specifically, in Gumbel-max we denote the sampling process as $\mathbf{z}= y_\theta(\mathbf{x}, \mathbf{g})\nonumber$,
where each symbol $z_i$ is generated by the following rule:
\begin{align}
\setcounter{equation}{13}
    z_i=\underset{\left\lbrace+1, -1\right\rbrace}{\mathrm{argmax}}\left\lbrace\log p^i(\mathbf{x},\theta) + g^{i}_1, \log \left \lbrack 1-p^i(\mathbf{x},\theta)\right \rbrack+ g^{i}_2\right\rbrace,
    \label{mapping}
\end{align}
where $g^{i}_1$ and $g^{i}_2$ are independent Gumbel variables with distribution
\begin{equation}
g^{i}_1,g^{i}_2\sim \mathrm{Gumbel}(0,1).     \nonumber
\end{equation}

In operation, we first draw a sample from $u\sim \mathrm{Uniform}(0,1)$, then use the inverse transform and generate the Gumbel-distributed variable as $g=-\log(-\log(u))$\cite{gumbel}.

$\bullet$ \emph{Gradient estimation}

Estimating the gradient of the lower bound (\ref{elbo}) w.r.t the encoder parameters $\theta$ is tricky, since for one thing, the sampling process is non-differentiable; for another, the usual Monte Carlo gradient estimator exhibits high variance\cite{vae}. 

To solve the first problem, the non-differentiable argmax operation in \eqref{mapping} will be approximated by the Softmax funtion\cite{gumbel}. The reparameterization trick widely applied in VAE\cite{vae} solves the second problem. Through the additional Gumbel noise $\mathbf{g}$ and the transformation $\mathbf{z}= y_\theta(\mathbf{x}, \mathbf{g})$, the Monte Carlo gradient estimator of $\theta$ w.r.t the expectation of an arbitrary function $h(\mathbf{z})$ now becomes:
\begin{align}
    \nabla_\theta\mathbb{E}_{p_{en}(\mathbf{z}|\mathbf{x},\theta)}\left\lbrack h(\mathbf{z})\right\rbrack&=\nabla_\theta\mathbb{E}_{p(\mathbf{g})}\left\lbrack h(y_\theta(\mathbf{x}, \mathbf{g}))\right\rbrack \nonumber\\
    &=\mathbb{E}_{p(\mathbf{g})}\nabla_\theta\left\lbrack h(y_\theta(\mathbf{x}, \mathbf{g}))\right\rbrack \nonumber\\
    &\simeq\frac{1}{N}\sum_{m=1}^{N}\nabla_\theta \left\lbrack h(y_\theta(\mathbf{x}, \mathbf{g}_m))\right\rbrack. \nonumber
\end{align}
The expectation is no longer with respect to the encoder parameters $\theta$ but the Gumbel noise, resulting a low-variance gradient estimator\cite{vae}.

\section{Experiment Results}

In this section, we use simulations to test the performance of our proposed method. The experiments are conducted on CIFAR10 dataset, where the class of the images is set as the semantic information. Three conventional methods are used as the benchmarks, including two digital modulation methods and one analog modulation method.
\subsection{Experiment Settings}
\subsubsection{Datasets}
Our experiments are performed on the CIFAR10 dataset which includes 10 classes of $32\times 32$ color images with a training set of 50000 examples and a test set of 10000 examples. 
% hand-written digits with a training set of 50000 $28\times 28$ grayscale image examples and a test set of 10000. The CIFAR-10 is a more complicated dataset with 10 classes of $32\times 32$ color images, having a training set of 50000 examples and a test set of 10000 examples. 

\subsubsection{Benchmarks}
We compare the performance of our proposed joint coding-modulation scheme with BPSK modulation with three benchmarks: the neural network based analog modulation (abbreviated as ``Analog''), the 8-bit quantizer for BPSK modulation (abbreviated as ``8-bit Uniform'') and learning-based quantization for BPSK modulation (abbreviated as ``1-bit NN''). The details of the baseline methods are listed as follows.
\begin{itemize}
\item \emph{``Analog'' method}: This method uses a neural network with four Resnet blocks\cite{resnet} for encoding, a 4-layer Spinal\cite{spinal} architecture for semantic information decoding and a 3-layer trans-convolution architecture for source data decoding. The output of the encoding network is a sequence of real numbers, and then is sent into AWGN channel directly. As a remark, this method does not obey the digital communication requirement of our system model. 
\item \emph{``8-bit Uniform'' method}: A same neural network architecture as Analog method is used here. However, the output of the encoding network is sent to an 8-bit uniform quantizer to convert into bit sequence. Then the ``0'' bits are associated with ``-1'', and the ``1'' bits are associated with ``+1'' in BPSK modulation.
% We notice that there is a trade-off between quantization bit number $n$ and the quantization error, since if $n$ is too small then the quantization error will dominate, and if $n$ is too large, the impact of the compression will dominate. Therefore, we respectively choose the appropriate scheme for MNIST and CIFAR: 8-bit quantizer and 4-bit quantizer.
\item \emph{``1-bit NN'' method}: A same neural network architecture as Analog method is used for joint source-channel coding. Then an one-layer MLP with a Sigmoid nonlinearity is applied as the 1-bit quantizer, which outputs the probability of $b_i=1$ for each bit. A hard decision is then applied to the output (if the probability is over 0.5, then this bit is decided as 1) so as to attain a bit sequence. An association from bit sequence to BPSK constellations are also made, which is similar to the 8-bit method.
\end{itemize}

\subsubsection{Neural Network of Our Method}
The neural network architecture of our method is the same as the analog method. All details of loss function design, training methods and gradient estimation follows Section III. 

\subsection{Performance Comparison}
\begin{figure}[htbp]
\centering
\vspace{-0.6cm}
\setlength{\abovecaptionskip}{-0.2cm}
\includegraphics[width=0.76\linewidth]{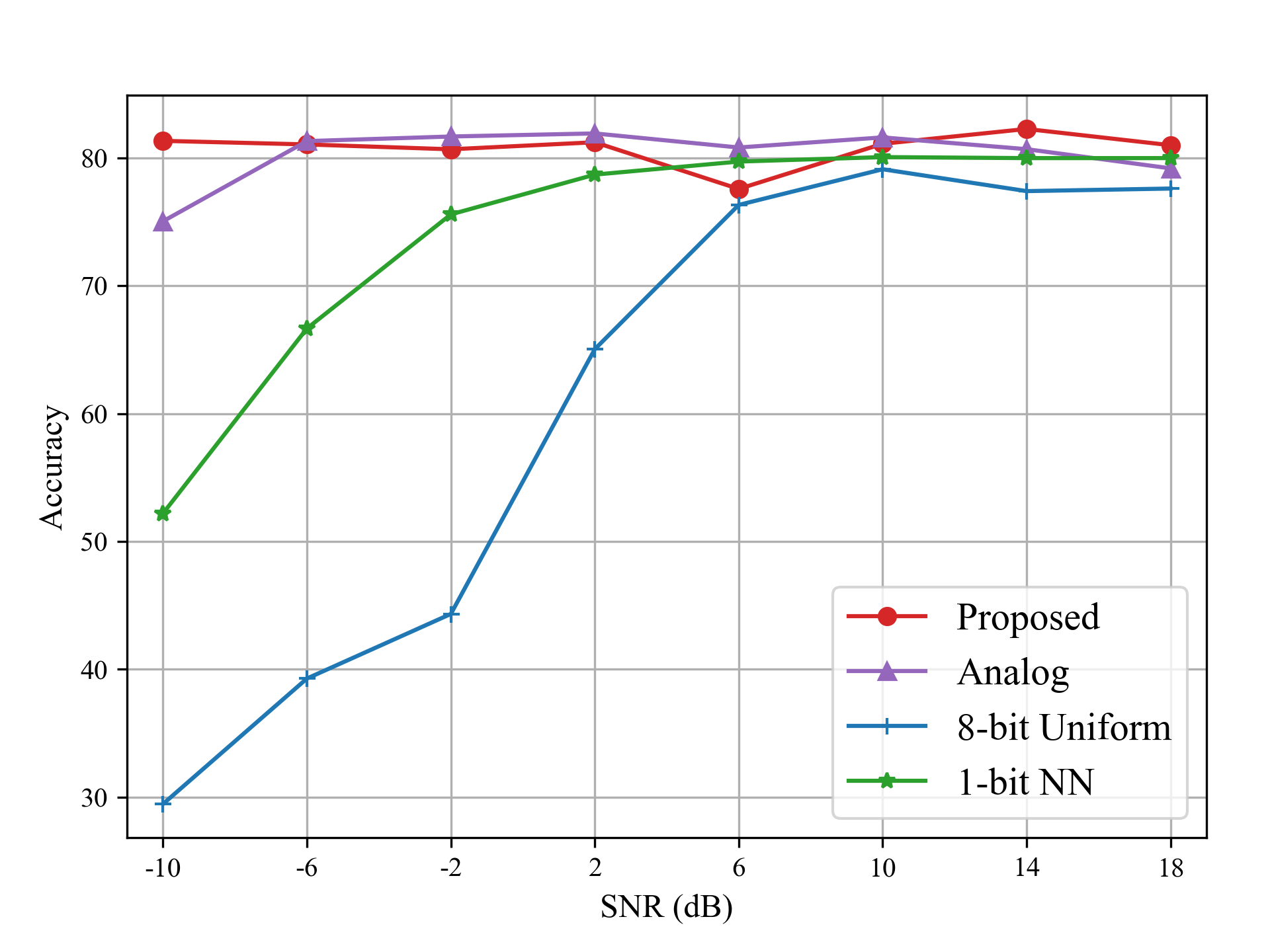}
\caption{Accuracy against SNR (code length n=1536).}
\label{snr acc}
\vspace{-0.6cm}
\end{figure}

\begin{figure}[htbp]
\centering
\vspace{-0.2cm}
\setlength{\abovecaptionskip}{-0.2cm}
\includegraphics[width=0.76\linewidth]{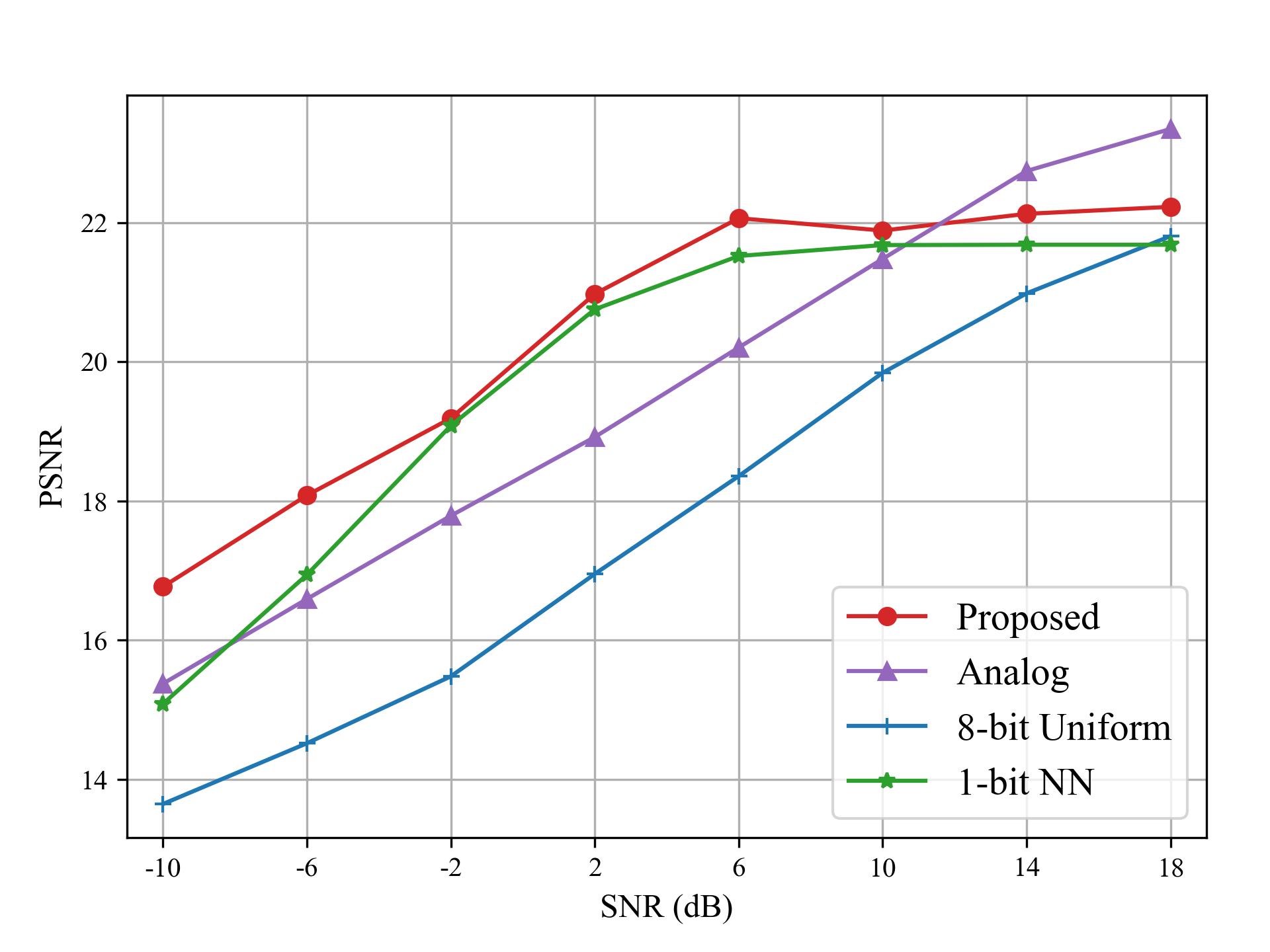}
\caption{PSNR against SNR (code length n=1536).}
\label{snr psnr}
\vspace{-0.15cm}
\end{figure}

Fig. \ref{snr acc} and Fig. \ref{snr psnr} plot the performance of all four methods against SNR, with channel input sequence length $n$=1536. As a general trend, our method outperforms baseline methods when SNR is low. Meanwhile, when SNR is high, the performance of all methods will converge, but our method still remains competitive, only slightly less than the Analog method.

Fig. \ref{snr acc} shows the semantic information accuracy performance against SNR. It can be seen that our method has stable performance with SNR changing, while the performance of all other methods will be largely affected if SNR is low. Therefore our method has an obvious performance advantage in low SNR regime. For example, when SNR=-10 dB, the accuracy of our method can reach $81.32\%$ while the second place Analog method can only achieves $73.63\%$.

Fig. \ref{snr psnr} shows the PSNR of observable source data reconstruction against SNR. Similar with the accuracy performance, our method keeps a performance advantage in most cases. Only when SNR=14 dB and SNR=18 dB, the PSNR performance of our method is slightly worse than the Analog method. When SNR=14 dB, the PSNR of our method is $22.13$ dB while $22.74$ dB for Analog method; when SNR=18 dB, the PSNR of our method is $22.23$ dB while $23.34$ dB for Analog method. 

Fig. \ref{n acc} and Fig. \ref{n psnr} plot the performance of all four methods against channel input sequence length $n$ where the channel SNR is set to be -2 dB. Generally, for both semantic information reconstruction and observable source data reconstruction, our method always outperforms three baseline methods. 

%n Fig. (\ref{codelength}), we demonstrate the two metrics with varying code length $n$. 
Fig. \ref{n acc} shows the semantic information accuracy performance against sequence length. It can be seen that when SNR=-2 dB, the accuracy performance of our method is close to that of Analog method, and greatly exceeds the other two methods. When $n$=512, our proposed method reaches a accuracy performance of $74.67\%$ while the 1-bit NN method and 8-bit Uniform method respectively have a performance of $68.49\%$ and $34.49\%$. 

Fig. \ref{n psnr} shows the PSNR performance of reconstructing observable source data against sequence length, for all four methods. As we can see, our method outperforms all three benchmarks in all code length. When SNR=-2 dB and $n$=512, the gap between our proposed method and the second place 1-bit NN method is about $0.8$ dB.

\begin{figure}[htbp]
\centering
\vspace{-0.5cm}
\setlength{\abovecaptionskip}{-0.2cm}
\includegraphics[width=0.76\linewidth]{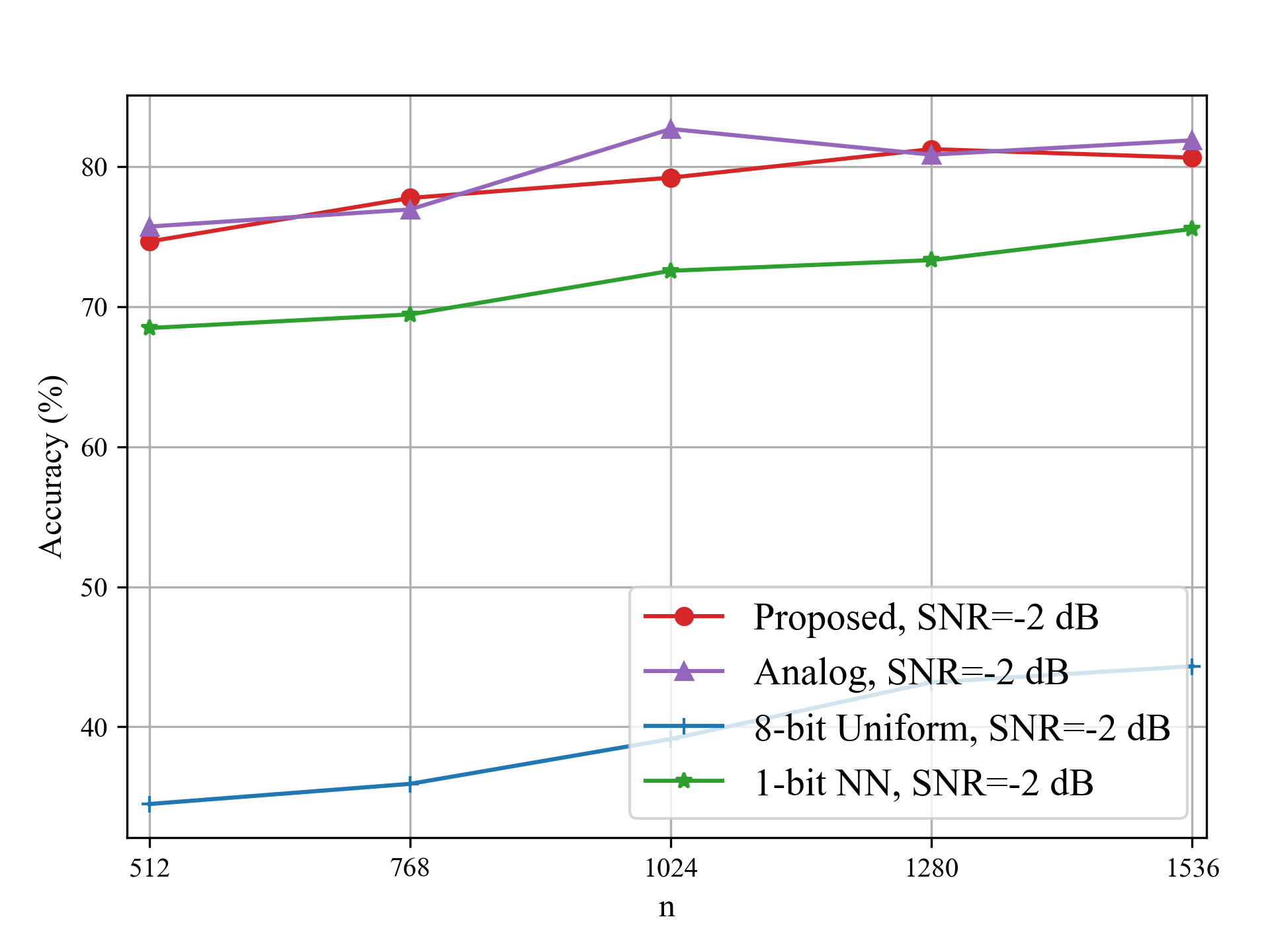}
\caption{Accuracy against code length n.}
\label{n acc}
\vspace{-0.4cm}
\end{figure}
\begin{figure}[htbp]
\centering
\vspace{-0.5cm}
\setlength{\abovecaptionskip}{-0.2cm}
\includegraphics[width=0.76\linewidth]{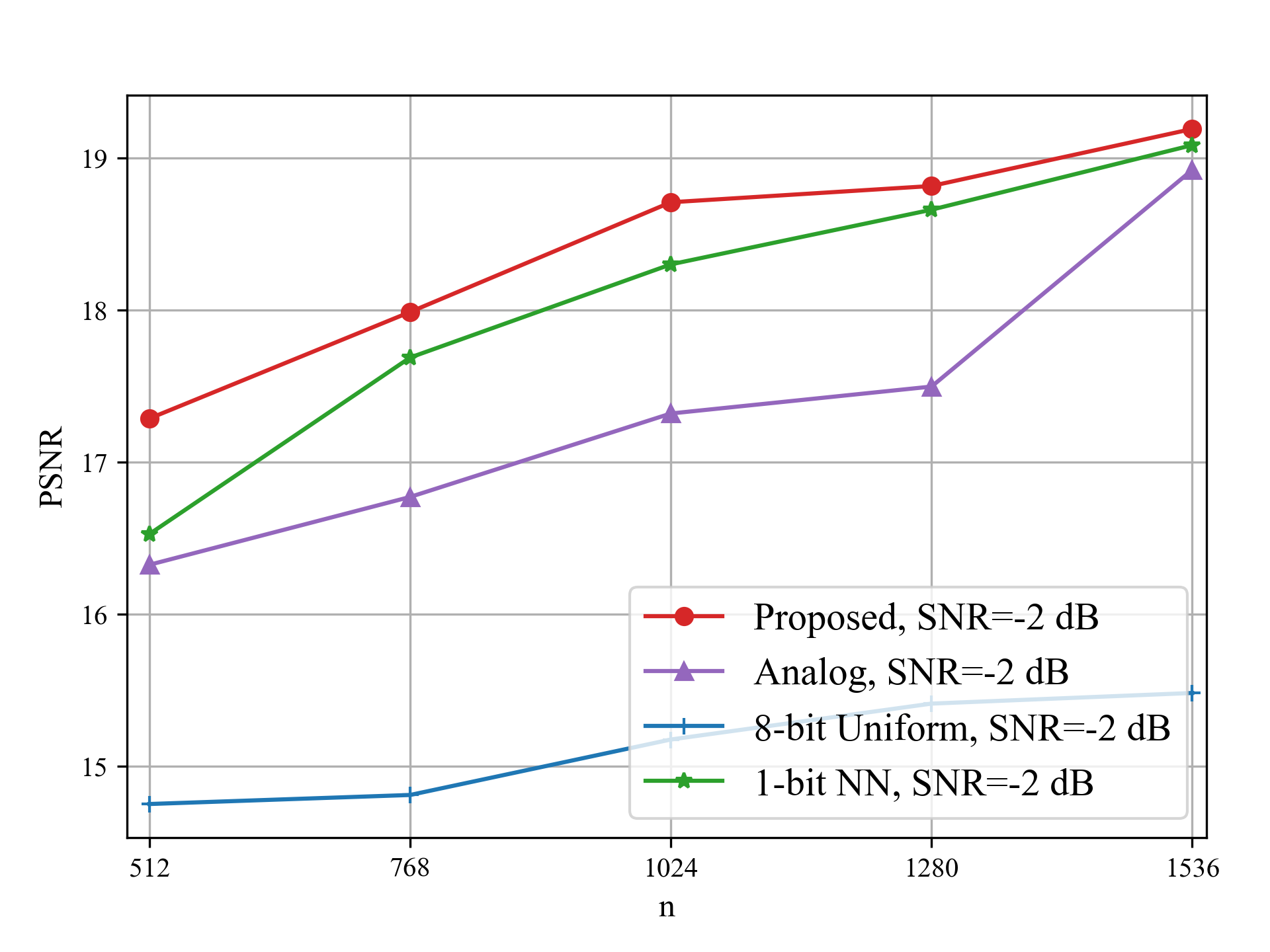}
\caption{PSNR against code length n.}
\label{n psnr}
\vspace{-0.6cm}
\end{figure}

\section{Conclusion}
In this paper, we propose a learning-based joint coding-modulation scheme for semantic communications with digital modulation. Experiment results show that our method can outperform all existing methods for digital semantic communications. Meanwhile, new loss functions are proposed, which are more operational for high dimensional data and reveal the relationship between mutual information based loss function and data distortion function based loss function. 

The method we proposed can be further extended to higher-order digital modulation systems, such as QPSK or 8QAM, which will be explored in our future work. 

% \section*{Acknowledgment}

% The preferred spelling of the word ``acknowledgment'' in America is without 
% an ``e'' after the ``g''. Avoid the stilted expression ``one of us (R. B. 
% G.) thanks $\ldots$''. Instead, try ``R. B. G. thanks$\ldots$''. Put sponsor 
% acknowledgments in the unnumbered footnote on the first page.

\bibliographystyle{IEEEtran}
\bibliography{ref}{}

\end{document}